\documentclass[conference]{IEEEtran}
\IEEEoverridecommandlockouts

\usepackage{cite}
\usepackage{amsmath,amssymb,amsfonts}
\usepackage{bm}           
\usepackage{graphicx}
\usepackage{textcomp}
\usepackage{xcolor}
\usepackage{booktabs}      
\usepackage{url}           
\usepackage[caption=false,font=footnotesize]{subfig}  
\usepackage{placeins}      

\DeclareMathOperator{\tr}{tr}
\DeclareMathOperator{\diag}{diag}
\DeclareMathOperator{\Reop}{Re}

\newcommand{\CN}{\mathcal{CN}}

\newtheorem{proposition}{Proposition}
\newtheorem{corollary}{Corollary}
\newtheorem{definition}{Definition}
\newtheorem{remark}{Remark}

\def\BibTeX{{\rm B\kern-.05em{\sc i\kern-.025em b}\kern-.08em
    T\kern-.1667em\lower.7ex\hbox{E}\kern-.125emX}}

\begin{document}

\title{Wideband Compressed-Domain Cram\'{e}r--Rao Bounds for
Near-Field XL-MIMO: Data and Geometric Diversity Decomposition}

\author{\IEEEauthorblockN{R{\i}fat Volkan \c{S}enyuva,~\IEEEmembership{Member,~IEEE}%
\thanks{Code and data: \protect\url{https://github.com/rvsenyuva/wb-nf-crb-globecom26}.
Concept DOI: \protect\url{https://doi.org/10.5281/zenodo.19487207} (all versions).}
\thanks{\copyright{} 2026 IEEE. Personal use of this material is permitted. Permission from IEEE must be obtained for all other uses, in any current or future media, including reprinting/republishing this material for advertising or promotional purposes, creating new collective works, for resale or redistribution to servers or lists, or reuse of any copyrighted component of this work in other works. Accepted for publication in GLOBECOM 2026 -- 2026 IEEE Global Communications Conference.}}
\IEEEauthorblockA{\textit{Department of Electrical-Electronics Engineering}\\
\textit{Maltepe University}\\
Istanbul, Turkey\\
rifatvolkansenyuva@maltepe.edu.tr}
}

\maketitle

\begin{abstract}
Wideband orthogonal frequency-division multiplexing (OFDM) over
near-field extremely large-scale MIMO (XL-MIMO) arrays couples beam
squint and wavefront curvature, and existing single-frequency
compressed covariance models are severely biased as a result.
To the best of our knowledge, no compressed-domain Cram\'{e}r--Rao
bound (CRB) has been reported for this regime under hybrid
analog--digital architectures; existing wideband near-field bounds
assume full-array observation.
We derive the wideband compressed-domain CRB and decompose its
Fisher information gain into two terms: a dominant data-diversity
term, exactly $10\log_{10} K_s$ dB, where $K_s$ denotes the number
of independent subcarrier observations, and we prove that for a
centred array the full-array range bound is invariant to beam
squint, so the full-array wideband gain equals the data-diversity
term to within $3\times10^{-4}$ dB.
Under hybrid compression, however, this invariance breaks: the
residual is a combiner-dependent fluctuation of either sign rather
than a propagation-geometry effect, with mean $+0.157$ dB and
standard deviation $0.767$ dB over 385 realizations at
$N_\mathrm{RF} = 16$, $B = 400$ MHz, and its spread shrinks toward
zero as the number of RF chains grows.
At 28 GHz with $B = 400$ MHz, data diversity contributes $+27.1$ dB
and hybrid compression contributes an additional $12.6$ dB gap
relative to the full-array bound at $N_\mathrm{RF} = 16$ RF chains.
Frequency-aware covariance modeling is therefore the dominant
requirement, and the residual is a property of the analog front
end rather than of the propagation geometry.
\end{abstract}

\begin{IEEEkeywords}
XL-MIMO, near-field, wideband, Cram\'{e}r--Rao bound, hybrid MIMO,
channel estimation, OFDM, Fisher information.
\end{IEEEkeywords}

\section{Introduction}
\label{sec:intro}

Extremely large-scale MIMO (XL-MIMO) arrays with hundreds or
thousands of antennas are a cornerstone of sixth-generation (6G)
wireless networks~\cite{liu2023nftutorial}.
At millimetre-wave (mmWave) and sub-THz carrier frequencies, the
array aperture becomes electrically large relative to the signal
bandwidth, so that two propagation effects coexist:
(i)~\emph{near-field} spherical-wave (Fresnel) propagation,
under which each scatterer is characterized by both an angle and
a range; and
(ii)~\emph{wideband beam squint}, under which the effective
electrical spacing of the array varies across OFDM
subcarriers~\cite{wang2024tcom}.
When both effects act simultaneously, the per-subcarrier array
response acquires a frequency-dependent quadratic phase that
couples angle, range, and frequency --- a phenomenon we term the
\emph{beam-squint~$\times$~Fresnel interaction} (frequency-dependent
beam squint modulating the Fresnel curvature term).

Existing wideband near-field channel estimators are predominantly
based on polar-domain compressed sensing (CS).
Cui and Dai~\cite{cui2023bpd} introduced bilinear pattern
detection (BPD) to exploit the linear frequency dependence of
the polar-domain support, building on their polar-domain
framework~\cite{cui2022polar}.
These methods, however, assume full-array digital access and do
not exploit the statistical structure of the compressed covariance
under hybrid analog--digital architectures~\cite{venugopal2017,heath2016overview}.
Meanwhile, covariance-domain estimation via
Kullback--Leibler (KL) divergence
fitting~\cite{stoica1990perf,ottersten1998cov,pote2023tsp}
has been applied to near-field channels only in the narrowband
regime~\cite{senyuva2026clkl}, to wideband channels only in
the far field~\cite{park2024sam}, and to hybrid-compressed
far-field arrays only in the narrowband
regime~\cite{senyuva2026sensors}.
The intersection of all four elements --- structured covariance
fitting, near-field Fresnel geometry, wideband OFDM, and hybrid
compression --- remains unaddressed by this prior line of work.
An efficient covariance-domain estimator closing exactly this gap
is developed separately in~\cite{senyuva2026wbclkl}; the present
work instead derives the compressed-domain performance bound for
this same setting.

On the performance-bounds side, wideband near-field CRBs
have recently been derived by
Wei et~al.~\cite{wei2025wbcrb} for position, velocity, and radar
cross-section, and by Wang et~al.~\cite{wang2025twc} for angle,
range, and complex gain.
Both works assume full-array observation
($\mathbf{W} = \mathbf{I}_M$) and do not account for the
information loss introduced by hybrid compression.
To the best of our knowledge, a compressed-domain wideband
CRB for hybrid near-field channel estimation has not
previously been reported.
A complementary line of work by
Wymeersch~\cite{wymeersch2020icc} analyzed the Fisher
information of joint near-field localization and clock
synchronization under wideband OFDM signaling with a
full uncompressed array, and concluded that
curvature-induced wavefront information dominates the
spatial-wideband contribution in that regime.
Wei et~al.~\cite{wei2025wbcrb} report a related finding in
their full-array wideband OFDM location bound: as array
aperture grows, near-field curvature and element-dependent
range variations dominate the position sensitivity that the
far-field approximation fails to capture.
The present analysis reaches a complementary conclusion in
the OFDM covariance domain under hybrid compression: the
per-subcarrier data-diversity term dominates the wideband
CRB gain, while the residual observed under hybrid
compression is a secondary, combiner-dependent effect at
current 5G New Radio (NR) fractional bandwidths.

This paper addresses the gap.
Our contributions are threefold.
First, we derive the per-subcarrier compressed covariance under
the wideband Fresnel--OFDM model and show that the frequency
ratio $\alpha_k = f_k/f_c$ jointly scales the linear (squint)
and quadratic (curvature) phase terms, producing a covariance
mismatch --- the Frobenius-norm deviation of the per-subcarrier
covariance from its center-frequency value --- up to 64\% at
$B = 100$~MHz and 177\% at $B = 400$~MHz.
Second, we derive the wideband compressed-domain CRB by summing
per-subcarrier Slepian--Bangs Fisher information matrices (FIMs)
and decompose the information gain into a dominant data-diversity
component ($\approx 10\log_{10}K_s$~dB, $K_s$ the number of
subcarriers used in the wideband FIM) that is exact for the full
array, and a secondary residual observed under hybrid compression.
Third, we compare the proposed compressed CRB with the full-array
bounds of~\cite{wei2025wbcrb,wang2025twc} and quantify the
compression loss as a function of the number of RF chains
$N_\mathrm{RF}$.

The remainder of the paper develops the wideband near-field
system model (Section~\ref{sec:model}), the compressed-domain
CRB (Section~\ref{sec:crb}) and its
range-gain decomposition (Section~\ref{sec:decomp}), reports numerical
results (Section~\ref{sec:results}), and concludes
(Section~\ref{sec:conclusion}).

\section{System Model}
\label{sec:model}

We consider a base station (BS) equipped with an $M$-element
uniform linear array (ULA) of half-wavelength spacing
$d_\mathrm{ant} = \lambda_c/2$, where $\lambda_c = c/f_c$ is the
carrier wavelength.
The BS communicates over an OFDM waveform with $K$~subcarriers
spaced by $\Delta f$ (5G~NR numerology~3: $\Delta f = 120$~kHz),
yielding bandwidth $B = K \Delta f$.
A hybrid analog--digital architecture with $N_\mathrm{RF} \ll M$
radio-frequency (RF) chains compresses the received signal before
digital processing.

\subsection{Fresnel Steering Vector}
\label{sec:fresnel}

Let $\bar{m} = m - (M\!-\!1)/2$ denote the centered element index.
Under the Fresnel (uniform spherical-wave) approximation, the
phase at the $m$th element due to a source at angle~$\theta$ and
range~$r$ consists of a linear term (angle) and a quadratic term
(curvature)~\cite{grosicki2005wlp}.
We parameterize these as
\begin{equation}
\omega(\theta) = -\frac{2\pi d_\mathrm{ant}}{\lambda_c}\cos\theta,
\qquad
\kappa(\theta,r) = \frac{\pi d_\mathrm{ant}^2}{\lambda_c}
  \frac{\sin^2\!\theta}{r},
\label{eq:omega_kappa}
\end{equation}
Since $d_\mathrm{ant} = \lambda_c/2$, this reduces to
$\omega(\theta) = -\pi\cos\theta$ for the angular term, while the
curvature term retains an explicit $\lambda_c$ dependence,
$\kappa(\theta,r) = (\pi\lambda_c/4)\sin^2\!\theta/r$.
The $m$th entry of the narrowband near-field steering vector is
then
$[\mathbf{a}(\theta,r)]_m =
\exp(j\omega\bar{m} - j\kappa\bar{m}^2)$.
We adopt the phase-only Fresnel/USW model standard in prior
near-field CRB
literature~\cite{grosicki2005wlp,wei2025wbcrb,wang2025twc}:
element-dependent amplitude variations of order $D_\mathrm{ap}/r$
are a small correction dominated by the phase information at the
ranges considered here.
At the Section~\ref{sec:results} default range $r = 5$~m the
aperture-to-range ratio is $D_\mathrm{ap}/r \approx 0.27$; it
exceeds unity at the shortest swept range $r = 1$~m, where the
approximation is correspondingly looser.

\subsection{Wideband Frequency Scaling}
\label{sec:ofdm}

The subcarrier frequencies are
$f_k = f_c + \big(k - (K+1)/2\big)\Delta f$ for $k = 1,\ldots,K$.
Define the frequency ratio
\begin{equation}
\alpha_k \triangleq \frac{f_k}{f_c}
  = 1 + \frac{\big(k - (K+1)/2\big)\Delta f}{f_c}.
\label{eq:alpha_k}
\end{equation}
Because the physical element spacing is fixed at
$d_\mathrm{ant}$, the effective electrical spacing at the $k$th
subcarrier is $d_\mathrm{ant} f_k / f_c = \alpha_k d_\mathrm{ant}$.
The factor~$\alpha_k$ therefore scales \emph{both} the linear phase
(beam squint) and the quadratic phase (curvature) simultaneously.
For the $\ell$th propagation path, the frequency-scaled Fresnel
steering vector at subcarrier~$k$ is
\begin{equation}
[\mathbf{a}_{\ell,k}]_m
  = \exp\!\big(j\,\alpha_k\,\omega_\ell\,\bar{m}
             - j\,\alpha_k\,\kappa_\ell\,\bar{m}^2\big),
\label{eq:a_k}
\end{equation}
where $\omega_\ell \triangleq \omega(\theta_\ell)$ and
$\kappa_\ell \triangleq \kappa(\theta_\ell, r_\ell)$.
When $\alpha_k = 1$ (center frequency), \eqref{eq:a_k} reduces to
the narrowband near-field steering vector.
When $\kappa_\ell = 0$ (far field), only the linear beam-squint
effect remains.

\subsection{Hybrid Compression}
\label{sec:hybrid}

The analog combiner $\mathbf{W} \in \mathbb{C}^{M \times N_\mathrm{RF}}$
is applied in the RF domain \emph{before} the OFDM FFT and is
therefore frequency-flat~\cite{venugopal2017,heath2016overview}.
Each entry satisfies the constant-modulus constraint
$|[\mathbf{W}]_{m,n}| = 1/\sqrt{M}$.
The compressed observation at subcarrier~$k$ and snapshot~$n$ is
\begin{equation}
\mathbf{y}_k(n) = \mathbf{W}^H \mathbf{x}_k(n)
  \in \mathbb{C}^{N_\mathrm{RF}},
\label{eq:y_k}
\end{equation}
where the full-array signal is
$\mathbf{x}_k(n) = \sum_{\ell=1}^{d} s_{\ell,k}(n)\,\mathbf{a}_{\ell,k}
 + \mathbf{w}_k(n)$
with i.i.d.\ path gains
$s_{\ell,k}(n) \sim \CN(0, p_\ell)$, independent across both
$\ell$ and~$k$, and noise
$\mathbf{w}_k(n) \sim \CN(\mathbf{0}, N_0\mathbf{I}_M)$.
The path power $p_\ell$ carries no subcarrier index,
corresponding to a fixed per-subcarrier transmit power
(equivalently, fixed per-subcarrier SNR) convention adopted
throughout; Remark~\ref{rem:energy_norm} discusses the
alternative fixed-total-pilot-energy convention.
Independence across~$k$ is the standard pilot-design
assumption: distinct pilot symbols are transmitted on
different OFDM subcarriers, so the per-subcarrier
compressed snapshots $\{\mathbf{y}_k(n)\}_{k=1}^{K}$ are
mutually independent and the joint log-likelihood factorises
over~$k$, consistent with the wideband-MIMO
modeling convention adopted
in~\cite{wei2025wbcrb,wang2025twc}.
In a physical wideband channel the same path carries a
deterministic delay-dependent phase evolution across
subcarriers; under orthogonal pilot probing this evolution
is absorbed into the per-subcarrier steering vector via the
$\alpha_k$-scaling of~\eqref{eq:a_k} and does not by itself
induce statistical dependence between the pilot observations
$\{\mathbf{y}_k(n)\}$.
This is a statistical observation-model assumption for
covariance-domain CRB evaluation under orthogonal pilot
probing, not a claim of physically independent propagation
mechanisms across frequency.

\subsection{Per-Subcarrier Compressed Covariance}
\label{sec:cov_k}

Under the stochastic signal model, the compressed covariance at
subcarrier~$k$ is
\begin{equation}
\mathbf{R}_{y,k}(\boldsymbol{\eta})
  = \sum_{\ell=1}^{d} p_\ell\,
    \mathbf{d}_{\ell,k}\,\mathbf{d}_{\ell,k}^H
  + N_0\,\mathbf{W}^H\mathbf{W},
\label{eq:Ry_k}
\end{equation}
where $\mathbf{d}_{\ell,k} \triangleq \mathbf{W}^H\mathbf{a}_{\ell,k}
\in \mathbb{C}^{N_\mathrm{RF}}$ is the compressed steering vector
and
$\boldsymbol{\eta} = [\omega_1,\ldots,\omega_d,\,
\kappa_1,\ldots,\kappa_d,\,p_1,\ldots,p_d,\,N_0]^T
\in \mathbb{R}^{3d+1}$
collects all unknown parameters.
The parameters are \emph{shared} across subcarriers; only the
$\alpha_k$-scaling of the steering vector changes with~$k$.
The sample covariance is estimated from $N$~snapshots as
$\widehat{\mathbf{R}}_{y,k} = \frac{1}{N}\sum_{n=1}^{N}
\mathbf{y}_k(n)\mathbf{y}_k(n)^H$.
In practice, only $K_s \le K$ uniformly-spaced subcarriers
are needed for CRB evaluation (Section~\ref{sec:crb}),
since the per-subcarrier FIM varies smoothly with~$\alpha_k$.
Specifically, we set $K_s = \min(K, K_s^{\max})$ with
$K_s^{\max} = 512$: for narrow bandwidths ($K < 512$) all
available subcarriers are used, while for wider bandwidths
($K > 512$) the cap reduces the observation count, which lowers the
total Fisher information by $10\log_{10}(K/K_s)$~dB while leaving
the geometric residual $\Delta_\mathrm{GD}$ of
Proposition~\ref{prop:geom_div} unchanged to within $0.006$~dB.
This convention is used throughout
Section~\ref{sec:results}.

\begin{remark}[Covariance Mismatch]
\label{rem:mismatch}
At the center frequency ($\alpha_{k_c} = 1$), the compressed
covariance reduces to the narrowband model
of~\cite{senyuva2026clkl}.
At edge subcarriers, however, $\alpha_k$ deviates from unity by
up to $\pm B/(2f_c)$.
The Frobenius-norm mismatch
$\|\mathbf{R}_{y,k} - \mathbf{R}_{y,k_c}\|_F /
 \|\mathbf{R}_{y,k_c}\|_F$
reaches 64\% at $B = 100$~MHz, 177\% at $B = 400$~MHz,
and 194\% at $B = 800$~MHz
($f_c = 28$~GHz, $M = 256$, $r \in [1, 100]$~m).
An estimator that ignores this frequency dependence suffers
model-mismatch bias proportional to bandwidth, motivating the
wideband treatment in Sections~\ref{sec:crb}--\ref{sec:decomp}.
\end{remark}

\section{Wideband Compressed-Domain CRB}
\label{sec:crb}

We derive the Cram\'{e}r--Rao bound (CRB) for the
wideband compressed observation model introduced in
Section~\ref{sec:model}.
By the pilot-design assumption stated below~\eqref{eq:y_k},
the per-subcarrier compressed snapshots
$\mathbf{y}_k(n) \sim \CN(\mathbf{0}, \mathbf{R}_{y,k})$
are mutually independent across~$k$ for each snapshot
index~$n$, so the joint log-likelihood factorises
and the wideband FIM decomposes as a sum of per-subcarrier
contributions~\cite{wei2025wbcrb,stoica1990perf}.

\subsection{Per-Subcarrier Slepian--Bangs FIM}
\label{sec:crb_fim}

Under the stochastic (unconditional) signal model, the
negative log-likelihood at the $k$th subcarrier (normalized
by~$N$) is~\cite{stoica1990perf}
\begin{equation}
\mathcal{L}_k
  = \log\det\mathbf{R}_{y,k}
  + \tr\!\big(\mathbf{R}_{y,k}^{-1}\widehat{\mathbf{R}}_{y,k}\big),
\label{eq:kl_k}
\end{equation}
which is the KL divergence between the sample covariance
$\widehat{\mathbf{R}}_{y,k}$ and the model covariance
$\mathbf{R}_{y,k}(\boldsymbol{\eta})$.

Recall the shared parameter vector from Section~\ref{sec:model}:
\begin{equation}
\boldsymbol{\eta}
  = \big[\omega_1,\ldots,\omega_d,\;
         \kappa_1,\ldots,\kappa_d,\;
         p_1,\ldots,p_d,\;
         N_0\big]^T
  \!\in \mathbb{R}^{3d+1},
\label{eq:eta}
\end{equation}
where $\omega_\ell = -(2\pi d_\mathrm{ant}/\lambda_c)\cos\theta_\ell$
is the spatial frequency and
$\kappa_\ell = (\pi d_\mathrm{ant}^2/\lambda_c)\sin^2\!\theta_\ell / r_\ell$
is the Fresnel curvature of the $\ell$th path.
The per-subcarrier Slepian--Bangs FIM is~\cite{stoica1990perf}
\begin{equation}
[\mathbf{J}_k]_{ij}
  = N \cdot \Reop\!\Big\{
      \tr\!\big(
        \mathbf{R}_{y,k}^{-1}
        \tfrac{\partial \mathbf{R}_{y,k}}{\partial \eta_i}\,
        \mathbf{R}_{y,k}^{-1}
        \tfrac{\partial \mathbf{R}_{y,k}}{\partial \eta_j}
      \big)
    \Big\},
\label{eq:fim_k}
\end{equation}
with $\mathbf{J}_k \in \mathbb{R}^{(3d+1)\times(3d+1)}$
for each $k \in \{1,\ldots,K\}$.

\subsection{Steering Vector Derivatives with $\alpha_k$ Scaling}
\label{sec:crb_deriv}

At the $k$th subcarrier, the frequency-scaled Fresnel
steering vector is (cf.\ Section~\ref{sec:model})
\begin{equation}
[\mathbf{a}_{\ell,k}]_m
  = \exp\!\big(
      j\,\alpha_k\omega_\ell\,\bar{m}
    - j\,\alpha_k\kappa_\ell\,\bar{m}^2
    \big),
\label{eq:a_lk}
\end{equation}
where $\alpha_k = f_k/f_c$ is the frequency ratio and
$\bar{m} = m - (M\!-\!1)/2$ is the centered element index.
The derivatives with respect to the spatial-frequency
and curvature parameters are
\begin{equation}
\frac{\partial \mathbf{a}_{\ell,k}}{\partial \omega_\ell}
  = j\,\alpha_k\,\bar{\mathbf{m}} \odot \mathbf{a}_{\ell,k},
\quad
\frac{\partial \mathbf{a}_{\ell,k}}{\partial \kappa_\ell}
  = -j\,\alpha_k\,\bar{\mathbf{m}}^{\odot 2} \odot \mathbf{a}_{\ell,k},
\label{eq:da_derivs}
\end{equation}
with $\bar{\mathbf{m}} = [\bar{m}_0,\ldots,\bar{m}_{M-1}]^T$.
The factor~$\alpha_k$ multiplying both derivatives is the
key structural difference from the narrowband CRB
in~\cite{senyuva2026clkl}: it causes each subcarrier to
``see'' the array at a different effective electrical length,
producing frequency-dependent Fisher information.

Define the compressed steering vector
$\mathbf{d}_{\ell,k} \triangleq \mathbf{W}^H \mathbf{a}_{\ell,k}
\in \mathbb{C}^{N_\mathrm{RF}}$.
The covariance derivatives needed in~\eqref{eq:fim_k} are
\begin{align}
\frac{\partial \mathbf{R}_{y,k}}{\partial \omega_\ell}
  &= p_\ell \!\bigg(
       \mathbf{W}^H
       \frac{\partial \mathbf{a}_{\ell,k}}{\partial \omega_\ell}
       \mathbf{d}_{\ell,k}^H
     + \mathbf{d}_{\ell,k}
       \Big(\mathbf{W}^H
       \frac{\partial \mathbf{a}_{\ell,k}}{\partial \omega_\ell}
       \Big)^{\!H}
     \bigg),
\label{eq:dRy_domega}\\[4pt]
\frac{\partial \mathbf{R}_{y,k}}{\partial \kappa_\ell}
  &= p_\ell \!\bigg(
       \mathbf{W}^H
       \frac{\partial \mathbf{a}_{\ell,k}}{\partial \kappa_\ell}
       \mathbf{d}_{\ell,k}^H
     + \mathbf{d}_{\ell,k}
       \Big(\mathbf{W}^H
       \frac{\partial \mathbf{a}_{\ell,k}}{\partial \kappa_\ell}
       \Big)^{\!H}
     \bigg),
\label{eq:dRy_dkappa}\\[4pt]
\frac{\partial \mathbf{R}_{y,k}}{\partial p_\ell}
  &= \mathbf{d}_{\ell,k}\,\mathbf{d}_{\ell,k}^H,
     \qquad
     \frac{\partial \mathbf{R}_{y,k}}{\partial N_0}
     = \mathbf{W}^H \mathbf{W}.
\label{eq:dRy_dN0}
\end{align}
Equations \eqref{eq:dRy_domega}--\eqref{eq:dRy_dN0} reduce
to the narrowband expressions in~\cite{senyuva2026clkl}
when $K=1$ and $\alpha_k = 1$.

\subsection{Wideband FIM and CRB}
\label{sec:crb_wb}

Because the subcarrier observations are mutually independent,
the wideband FIM over $K_s$ selected subcarriers is
\begin{equation}
\mathbf{J}_\mathrm{WB}
  = \sum_{k=1}^{K_s} \mathbf{J}_k
  \in \mathbb{R}^{(3d+1)\times(3d+1)}.
\label{eq:fim_wb}
\end{equation}
The dimension of $\mathbf{J}_\mathrm{WB}$ is determined solely by
the number of unknown parameters~$(3d+1)$ and does not grow
with~$K_s$.
Each additional subcarrier contributes a positive-semidefinite
term~$\mathbf{J}_k \succeq \mathbf{0}$, so the wideband FIM
is at least as large (in the L\"{o}wner sense) as any
single-subcarrier FIM: $\mathbf{J}_\mathrm{WB} \succeq \mathbf{J}_k$
for all~$k$.

\paragraph{SVD pseudoinverse}
The wideband CRB matrix is obtained by inverting
$\mathbf{J}_\mathrm{WB}$.
When the number of paths~$d$ is large relative to
$N_\mathrm{RF}$, the FIM can become ill-conditioned.
We therefore use the SVD pseudoinverse with tolerance
$\varepsilon_\mathrm{sv} = 10^{-6}\,\sigma_{\max}(\mathbf{J}_\mathrm{WB})$,
following~\cite{senyuva2026clkl}:
\begin{equation}
\mathbf{J}_\mathrm{WB}^{\dagger}
  = \mathbf{V}\,
    \diag\!\Big(\frac{1}{\sigma_1},\ldots,\frac{1}{\sigma_r},
                 0,\ldots,0\Big)\,
    \mathbf{V}^T,
\label{eq:fim_pinv}
\end{equation}
where $\mathbf{J}_\mathrm{WB} = \mathbf{V}\,\diag(\sigma_1,\ldots,
\sigma_{3d+1})\,\mathbf{V}^T$ is the eigendecomposition and
$r = |\{i : \sigma_i > \varepsilon_\mathrm{sv}\}|$ is the
numerical rank.

\subsection{Error Propagation to Physical Parameters}
\label{sec:crb_prop}

The CRB for the $\ell$th spatial frequency is
$[\mathbf{J}_\mathrm{WB}^{\dagger}]_{\ell\ell}$ and the CRB
for the $\ell$th curvature is
$[\mathbf{J}_\mathrm{WB}^{\dagger}]_{d+\ell,d+\ell}$.
Propagating to the physical angle~$\theta_\ell$ and
range~$r_\ell$ via
\begin{equation}
\frac{\partial\omega_\ell}{\partial\theta_\ell}
  = \frac{2\pi d_\mathrm{ant}}{\lambda_c}\sin\theta_\ell,
\qquad
\frac{\partial\kappa_\ell}{\partial r_\ell}
  = -\frac{\kappa_\ell}{r_\ell},
\label{eq:jac}
\end{equation}
the marginal CRBs for angle and range are
\begin{equation}
\mathrm{CRB}_{\theta_\ell}
  = \frac{[\mathbf{J}_\mathrm{WB}^{\dagger}]_{\ell\ell}}{%
     \big(\partial\omega_\ell/\partial\theta_\ell\big)^2},
\qquad
\mathrm{CRB}_{r_\ell}
  = \frac{[\mathbf{J}_\mathrm{WB}^{\dagger}]_{d+\ell,d+\ell}}{%
     \big(\partial\kappa_\ell/\partial r_\ell\big)^2}.
\label{eq:crb_theta_r}
\end{equation}
All CRB curves in Section~\ref{sec:results} are reported as
$\sqrt{\mathrm{CRB}_{\theta_\ell}}$ (degrees) and
$\sqrt{\mathrm{CRB}_{r_\ell}}$ (metres).
Equation~\eqref{eq:crb_theta_r} propagates the marginal curvature
variance alone, omitting the angle-dependence of the range mapping.
We use it for the scalar bounds of Section~\ref{sec:results} and for
Fig.~\ref{fig:crb_bw}; Fig.~\ref{fig:crb_range}(b) instead plots the
full-Jacobian range bound $\mathbf{g}_r^T \mathbf{C}_\eta
\mathbf{g}_r$. The conventions differ by at most $0.024$~dB at
$r = 5$~m and $0.69$~dB at $r = 1$~m.

\paragraph{Compressed vs.\ full-array CRB}
Hybrid compression discards $M - N_\mathrm{RF}$ spatial degrees of
freedom per snapshot, so the CRB
in~\eqref{eq:crb_theta_r} is strictly larger than the
full-array wideband CRBs of~\cite{wei2025wbcrb,wang2025twc}.
This compressed-domain CRB is the appropriate bound against which to
plot estimator RMSE for a hybrid architecture.

\section{Information Decomposition}
\label{sec:decomp}

The wideband FIM $\mathbf{J}_\mathrm{WB} = \sum_k \mathbf{J}_k$
aggregates Fisher information from $K_s$~subcarriers.
We decompose the resulting CRB improvement over the
narrowband (single-subcarrier) bound into two
physically distinct mechanisms: \emph{data diversity} and
what we term, for continuity with this paper's title,
\emph{geometric diversity} -- though, as
Proposition~\ref{prop:geom_div} shows, its curvature-induced
contribution is provably negligible ($\le 3\times10^{-4}$~dB). The
residual instead arises from beam squint interacting with the analog
combiner under hybrid compression, and is referred to throughout this
paper simply as the residual.

\begin{definition}[Narrowband Reference CRB]
\label{def:nb_crb}
The narrowband CRB is obtained by evaluating the
per-subcarrier FIM at the center frequency alone, i.e.,
$\mathbf{J}_\mathrm{NB} \triangleq \mathbf{J}_{k_c}$ with
$\alpha_{k_c} = 1$.
\end{definition}

\begin{definition}[Data-Diversity FIM]
\label{def:data_div}
The data-diversity FIM is the $K_s$-fold replication of the
center-frequency FIM:
$\mathbf{J}_\mathrm{DD} \triangleq K_s \cdot \mathbf{J}_\mathrm{NB}$.
This represents the information gain from having
$K_s$~independent covariance snapshots at the same frequency.
\end{definition}

\subsection{Data Diversity}

\begin{proposition}[Data Diversity]
\label{prop:data_div}
If the per-subcarrier FIMs $\mathbf{J}_k$ share the same
eigenvector structure (i.e., $\mathbf{J}_k = \beta_k\,
\mathbf{J}_\mathrm{NB}$ with scalar $\beta_k > 0$ for all~$k$),
under the independence assumption of
Section~\ref{sec:hybrid}, then
\begin{equation}
\mathbf{J}_\mathrm{WB}
  = \bigg(\sum_{k=1}^{K_s}\beta_k\bigg)\mathbf{J}_\mathrm{NB},
\label{eq:fim_scaled}
\end{equation}
and the CRB improvement over the narrowband bound is
\begin{equation}
\Delta_\mathrm{DD}
  = 10\log_{10}\!\bigg(\sum_{k=1}^{K_s}\beta_k\bigg)
  \approx 10\log_{10}(K_s)
  \;\;\text{dB},
\label{eq:gain_dd}
\end{equation}
where the approximation holds when $\beta_k \approx 1$
for all~$k$.
\end{proposition}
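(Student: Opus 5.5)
The plan is to combine the additive wideband FIM in \eqref{eq:fim_wb} with the proportionality hypothesis, and then read off the CRB ratio through the pseudoinverse in \eqref{eq:fim_pinv}.

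\textbf{Step 1: the wideband FIM.} By the independence assumption of Section~\ref{sec:hybrid}, the joint log-likelihood factorises over subcarriers. Hence $\mathbf{J}_\mathrm{WB}=\sum_{k=1}^{K_s}\mathbf{J}_k$, as already stated in \eqref{eq:fim_wb}. Substituting $\mathbf{J}_k=\beta_k\mathbf{J}_\mathrm{NB}$ and pulling the scalars out of the finite sum by linearity gives \eqref{eq:fim_scaled} directly.

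\textbf{Step 2: the CRB ratio.} Write $S=\sum_k\beta_k>0$ and take the eigendecomposition $\mathbf{J}_\mathrm{NB}=\mathbf{V}\diag(\sigma_i)\mathbf{V}^T$. Then $\mathbf{J}_\mathrm{WB}=\mathbf{V}\diag(S\sigma_i)\mathbf{V}^T$ has the same eigenvectors with eigenvalues scaled by $S$.

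The relative SVD tolerance $\varepsilon_\mathrm{sv}=10^{-6}\sigma_{\max}$ also scales by $S$. So the retained index set $\{i:\sigma_i>10^{-6}\sigma_{\max}\}$, and hence the numerical rank $r$, is unchanged. It follows from \eqref{eq:fim_pinv} that $\mathbf{J}_\mathrm{WB}^\dagger=S^{-1}\mathbf{J}_\mathrm{NB}^\dagger$. I expect this to be the only step needing care: with an \emph{absolute} threshold, scaling could change the rank, and the identity would then fail.

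\textbf{Step 3: physical parameters and the dB gain.} Every diagonal entry, and every quadratic form $\mathbf{g}^T\mathbf{J}^\dagger\mathbf{g}$, is therefore reduced by the factor $S$. This covers the angle and range bounds in \eqref{eq:crb_theta_r}, since the Jacobians \eqref{eq:jac} do not depend on $k$. The CRB improvement in dB is then $10\log_{10}S$, which is the equality in \eqref{eq:gain_dd}.

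\textbf{Step 4: the approximation.} For the approximation, write $\beta_k=1+\epsilon_k$. Then $10\log_{10}S=10\log_{10}K_s+10\log_{10}(1+\bar\epsilon)$, where $\bar\epsilon=K_s^{-1}\sum_k\epsilon_k$. The correction term is $O(\max_k|\epsilon_k|)$, which vanishes as $\beta_k\to1$.

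I would add one heuristic remark on why $\beta_k\approx1$ is plausible. The $\alpha_k$ factors in \eqref{eq:da_derivs} deviate from unity by at most $B/(2f_c)$, which is about $0.7\%$ at $400$~MHz and $28$~GHz. This remark only motivates the hypothesis and is not needed for the proposition as stated.
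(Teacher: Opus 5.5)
Your proposal is correct and is the paper's argument: substituting $\mathbf{J}_k=\beta_k\mathbf{J}_\mathrm{NB}$ into \eqref{eq:fim_wb} gives $\mathbf{J}_\mathrm{WB}^{\dagger}=(\sum_k\beta_k)^{-1}\mathbf{J}_\mathrm{NB}^{\dagger}$. Your point that the relative tolerance $\varepsilon_\mathrm{sv}$ leaves the numerical rank unchanged is a useful refinement of the paper's version, which simply writes $\mathbf{J}_\mathrm{NB}^{-1}$.

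One caution on your closing heuristic: the prefactors $\alpha_k\approx1$ justify $\beta_k\approx1$ only for $\mathbf{W}=\mathbf{I}_M$. Under compression, $\mathbf{J}_k$ also depends on the squint phase $(\alpha_k-1)\omega\bar m$ inside $\mathbf{W}^H\mathbf{a}_{\ell,k}$. That phase reaches about $2$~rad at the array edge at $400$~MHz, and it is exactly what produces the combiner-dependent residual of Proposition~\ref{prop:geom_div}.
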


\begin{IEEEproof}
Under the stated condition,
$\mathbf{J}_\mathrm{WB}^{\dagger}
= (\sum_k \beta_k)^{-1}\,\mathbf{J}_\mathrm{NB}^{-1}$,
so $\mathrm{CRB}_i^\mathrm{WB} =
\mathrm{CRB}_i^\mathrm{NB}/\sum_k\beta_k$ for any~$\eta_i$.
If the independence assumption is relaxed (correlated path
gains across subcarriers), the effective number of
independent observations falls below~$K_s$ and
$\Delta_\mathrm{DD}$ shrinks accordingly, since correlated
observations contribute less than one full independent
snapshot each to the sum in~\eqref{eq:fim_scaled}.
\end{IEEEproof}

\begin{remark}[Energy-Normalization Convention]
\label{rem:energy_norm}
Proposition~\ref{prop:data_div}'s $10\log_{10}(K_s)$
data-diversity gain assumes the fixed per-subcarrier
transmit power convention of Section~\ref{sec:hybrid}.
Under a fixed-total-pilot-energy budget spread over
$K_s$ subcarriers instead, per-subcarrier SNR falls as
$1/K_s$; at the default operating point ($K_s = 512$,
$\mathrm{SNR} = 10$~dB) this implies an effective
per-subcarrier SNR of approximately $-17$~dB, near the
estimator threshold region identified in
Fig.~\ref{fig:ctA}. Rather than extrapolating from the
$\beta_k$ proportionality of
Proposition~\ref{prop:data_div}, a dedicated numerical check
re-derives the wideband FIM directly under this
renormalization. The result does not support a collapse toward
$0$~dB: the measured wideband-to-narrowband CRB change
under fixed-total-energy renormalization,
$\Delta_\mathrm{FE}$, ranges from $-0.57$~dB ($K_s = 64$)
to $-12.01$~dB ($K_s = 3333$), with
$\Delta_\mathrm{FE} = -5.01$~dB at the default operating
point -- a net loss, not a small surviving residual. This
reflects the fact that the compressed-domain Slepian--Bangs
FIM is not linear in per-subcarrier path power at the
resulting low SNR, so the $K_s$-fold observation-count
gain of Proposition~\ref{prop:data_div} is not recovered
under this convention. Proposition~\ref{prop:data_div}
itself is unaffected by this renormalization: its fixed
per-subcarrier-power convention remains the physically
natural model for OFDM pilot-based channel estimation, as
discussed in Section~\ref{sec:discussion}.
\end{remark}

\subsection{Range-Gain Decomposition}

\begin{proposition}[Full-Array Wideband Range-Gain Invariance]
\label{prop:geom_div}
Consider a centred, single-path ($d=1$) uniform linear array with
element gains either uniform or following any even-symmetric amplitude
taper, spatially white noise, per-subcarrier narrowband steering, and
the estimated parameter vector
$\boldsymbol{\eta} = [\omega, \kappa, p, N_0]^T$. Let
$\mathbf{W} = \mathbf{I}_M$ (full array). Define the common linear
phase ramp as the per-subcarrier diagonal unitary
$\boldsymbol{\Phi}_k = \mathrm{diag}\big(e^{j (\alpha_k - 1)
\omega \bar{m}}\big)$, linear in
the centred element index $\bar{m}$. Then the score for $\omega$ is
orthogonal to the scores for $(\kappa, p, N_0)$ at every subcarrier,
the Fisher block for $(\kappa, p, N_0)$ is invariant under
$\boldsymbol{\Phi}_k$, and the full-array wideband range gain is
exactly
\begin{equation}
\Delta_\mathrm{GD}(r) = 10\log_{10}(K_s) + \Delta_{\alpha^2}(B)
\;\;\text{dB},
\label{eq:gain_gd_fullarray}
\end{equation}
with
\begin{equation}
\Delta_{\alpha^2}(B) = 10\log_{10}\!\big(1 + (B/f_c)^2/12\big)
\le 3\times10^{-4}~\text{dB}
\label{eq:gd_scalar_bound}
\end{equation}
over the tested regime $B/f_c \in [0.002, 0.0286]$, and beam squint
contributes exactly zero to the full-array range gain.
\end{proposition}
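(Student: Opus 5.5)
The plan is to reduce the full-array single-path Slepian--Bangs FIM to a closed form in which all dependence on $k$ enters only through the derivative factor $\alpha_k$ of~\eqref{eq:da_derivs}, and then sum over subcarriers.

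\emph{Step 1: invariance of the $(\kappa,p,N_0)$ block.} Write the (possibly tapered) steering vector as $[\mathbf{a}_k]_m = g_m e^{j\alpha_k\omega\bar{m}-j\alpha_k\kappa\bar{m}^2}$ with $g_m=g_{-m}$ real, and set $\mathbf{W}=\mathbf{I}_M$. Then $\mathbf{a}_k=\boldsymbol{\Phi}_k\tilde{\mathbf{a}}_k$, where $\tilde{\mathbf{a}}_k$ carries the unsquinted linear phase $\omega\bar{m}$. Hence $\mathbf{R}_k=p\,\mathbf{a}_k\mathbf{a}_k^H+N_0\mathbf{I}=\boldsymbol{\Phi}_k\tilde{\mathbf{R}}_k\boldsymbol{\Phi}_k^H$. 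Because $\boldsymbol{\Phi}_k$ does not depend on $(\kappa,p,N_0)$, the derivatives \eqref{eq:dRy_dkappa}--\eqref{eq:dRy_dN0} transform by the same unitary conjugation. By cyclicity of the trace in~\eqref{eq:fim_k}, the $(\kappa,p,N_0)$ block is therefore invariant under $\boldsymbol{\Phi}_k$.

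\emph{Step 2: orthogonality and decoupling.} Use the Woodbury form $\mathbf{R}_k^{-1}=N_0^{-1}(\mathbf{I}-c\,\mathbf{a}_k\mathbf{a}_k^H)$, with $c=p/(N_0+p\|\mathbf{a}_k\|^2)$ independent of $k$. Every FIM entry then reduces to combinations of three kinds of quantity:
\begin{itemize}
\item $\mathbf{a}_k^H(\bar{\mathbf{m}}^{\odot n}\odot\mathbf{a}_k)=\sum_m\bar{m}^n g_m^2$, which is real, free of all phases, and zero for odd $n$ by the even taper;
\item analogous cross products between the two derivative vectors, which contribute further $\bar{m}^n$ moments;
\item the explicit factors $j\alpha_k$ and $-j\alpha_k$ coming from~\eqref{eq:da_derivs}.
\end{itemize}
The $\omega$--$\kappa$ cross terms involve $\sum_m\bar{m}^3g_m^2$ and $\sum_m\bar{m}g_m^2$, so they vanish. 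The $\omega$--$p$ and $\omega$--$N_0$ terms vanish for the same reason. The $\kappa$--$p$ and $\kappa$--$N_0$ terms have the form $\Reop\{(\text{purely imaginary})\times(\text{real})\}$, so they are zero as well. This establishes that the $\omega$ score is orthogonal to the scores for $(\kappa,p,N_0)$. It also makes $\kappa$ decoupled from the nuisance parameters, with $[\mathbf{J}_k]_{\kappa\kappa}=\alpha_k^2 J_0$. Here $J_0$ is the center-frequency value, which depends only on $p$, $N_0$ and the moments $\sum_m\bar{m}^{2},\sum_m\bar{m}^4$ weighted by $g_m^2$. The squint phase $\omega$ is absent from $J_0$, so beam squint contributes exactly zero.

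\emph{Step 3: summation and the scalar bound.} Since $\partial\kappa/\partial r$ in~\eqref{eq:jac} does not depend on $k$, the range-CRB ratio is
\[
\mathrm{CRB}^{\mathrm{NB}}_r/\mathrm{CRB}^{\mathrm{WB}}_r=\sum_k\alpha_k^2=K_s\big(1+\overline{(\alpha_k-1)^2}\big).
\]
This uses $\overline{\alpha_k-1}=0$, which holds because the $K_s$ selected subcarriers are uniformly spaced and symmetric about $f_c$ by~\eqref{eq:alpha_k}. For a uniform grid spanning $B$, the empirical variance is $(B/f_c)^2/12$, up to a factor $(1-1/K_s^2)$ in the discrete case. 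This yields~\eqref{eq:gain_gd_fullarray}, with $\Delta_{\alpha^2}$ taken as an upper value. Evaluating at $B/f_c=0.0286$ gives $10\log_{10}(1+6.8\times10^{-5})\approx2.96\times10^{-4}$~dB. The bound~\eqref{eq:gd_scalar_bound} then follows by monotonicity in $B$.

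The main obstacle is Step 2. Each of the roughly ten Slepian--Bangs entries must be checked carefully, and the sign and imaginary-unit bookkeeping must be tracked so that all cross terms, including those through $N_0$, are confirmed to vanish exactly rather than approximately. I would first carry this out in the projector form $\dot{\mathbf{a}}_i^H\mathbf{P}^\perp_{\mathbf{a}}\dot{\mathbf{a}}_j$ of the single-source CRB. A secondary subtlety is that the claimed ``exactly'' requires stating precisely whether $B$ denotes the sampled span or the full bandwidth in the discrete variance.
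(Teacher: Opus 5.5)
Your proposal is correct and follows essentially the paper's route. Phase-free centred-aperture moments with vanishing odd moments give the $\omega$ orthogonality, the diagonal unitary $\boldsymbol{\Phi}_k$ drops out of the $(\kappa,p,N_0)$ block (your similarity-plus-trace-cyclicity argument is a cleaner version of the paper's commutation remark that needs no taper symmetry), and the $\alpha_k^2$ scaling of the curvature entry summed over a symmetric band gives $\sum_k\alpha_k^2 = K_s\overline{\alpha^2}$. Your explicit check that the $\kappa$--$p$ and $\kappa$--$N_0$ entries vanish (since $\mathbf{a}_k^H\,\partial\mathbf{a}_k/\partial\kappa$ is purely imaginary) is what actually makes the range-CRB ratio exactly $\sum_k\alpha_k^2$, a step the paper's proof leaves implicit. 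Your caveat is also fair: a discrete subcarrier grid reproduces $1+(B/f_c)^2/12$ only up to a grid-dependent correction to the variance term (e.g.\ the factor $1-1/K_s^2$), which qualifies ``exactly'' in the statement.
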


\begin{corollary}[Physical-Parameter Invariance]
\label{cor:jacobian_invariance}
The map from $(\omega, \kappa)$ to $(\theta, r)$ in~\eqref{eq:jac} is a
deterministic Jacobian independent of the subcarrier phase ramp
$\boldsymbol{\Phi}_k$. The physical-parameter CRB therefore transforms
by congruence and inherits the invariance of
Proposition~\ref{prop:geom_div}: the full-array range and angle bounds
are exactly $10\log_{10}(K_s) + \Delta_{\alpha^2}(B)$ apart from their
narrowband references, with no separate correction arising from the
change of parameterisation.
\end{corollary}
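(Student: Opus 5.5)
The plan is to derive the corollary directly from the block structure established in Proposition~\ref{prop:geom_div}, together with the fact that the Jacobian~\eqref{eq:jac} does not depend on~$k$. Write $\mathbf{G} = \partial(\omega,\kappa)/\partial(\theta,r)$ for the $2\times 2$ Jacobian of the (invertible) map $(\theta,r)\mapsto(\omega,\kappa)$, extended by the identity on $(p,N_0)$. By the reparameterisation rule for the FIM, the physical-parameter FIM at subcarrier~$k$ is $\mathbf{G}^T\mathbf{J}_k\mathbf{G}$. Since $\mathbf{G}$ is evaluated at the true $(\theta,r)$ only, it is free of $\alpha_k$ and $\boldsymbol{\Phi}_k$. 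Summing over~$k$ therefore gives $\mathbf{G}^T\mathbf{J}_\mathrm{WB}\mathbf{G}$, and the physical CRB is $\mathbf{G}^{-1}\mathbf{J}_\mathrm{WB}^{-1}\mathbf{G}^{-T}$, a congruence of $\mathbf{J}_\mathrm{WB}^{-1}$. The same identity holds with $\mathbf{J}_\mathrm{NB}$ in place of $\mathbf{J}_\mathrm{WB}$.

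Next I use the conclusions of Proposition~\ref{prop:geom_div}. First, the $\omega$ score is orthogonal to the $(\kappa,p,N_0)$ scores at every subcarrier, so $\mathbf{J}_k$, and hence $\mathbf{J}_\mathrm{WB}$ and $\mathbf{J}_\mathrm{NB}$, are block diagonal, with the scalar $\omega$ block separated from the $3\times 3$ $(\kappa,p,N_0)$ block. Second, by the $\boldsymbol{\Phi}_k$-invariance, the $(\kappa,p,N_0)$ block aggregates to the narrowband block scaled by $\sum_k\alpha_k^2$, which is what yields~\eqref{eq:gain_gd_fullarray}.

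For the $\omega$ block I will check the same scaling directly. Conjugating by $\boldsymbol{\Phi}_k$ removes the squint ramp, and the derivative~\eqref{eq:da_derivs} carries a single factor $\alpha_k$. Together with centring and even symmetry, which kill the $\bar{m}$--$\bar{m}^2$ and $\bar{m}$--constant cross moments, this gives $[\mathbf{J}_k]_{\omega\omega} = \alpha_k^2[\mathbf{J}_\mathrm{NB}]_{\omega\omega}$. It then suffices to use $\sum_{k}\alpha_k^2 = K_s\big(1+(B/f_c)^2/12\big)$, up to the discrete-grid correction already absorbed in the proposition.

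Combining the two blocks, the marginal covariance of $(\omega,\kappa)$ under wideband observation is $\mathrm{diag}(c_\omega,c_\kappa)/\sum_k\alpha_k^2$, where $\mathrm{diag}(c_\omega,c_\kappa)$ is the narrowband marginal covariance. Its off-diagonal entry is zero by orthogonality. For any linear functional $\mathbf{g}$, and in particular the full-Jacobian range functional $\mathbf{g}_r$ used in Fig.~\ref{fig:crb_range}(b) and the angle functional, the bound $\mathbf{g}^T\mathbf{C}_\eta\mathbf{g}$ therefore scales by the same factor $1/\sum_k\alpha_k^2$. This gives a gain of exactly $10\log_{10}K_s+\Delta_{\alpha^2}(B)$ dB, with no additional parameterisation term. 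The marginal convention~\eqref{eq:crb_theta_r} is a special case of this argument.

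I expect the main obstacle to be the claim that the $\omega$ and $(\kappa,p,N_0)$ blocks scale by the \emph{same} scalar. The $(p,N_0)$ derivatives carry no $\alpha_k$ factor, so I must confirm that the Schur-complemented $\kappa$ marginal, and not the raw block, scales by $\sum_k\alpha_k^2$ exactly as Proposition~\ref{prop:geom_div} states. The curvature factor $\alpha_k\kappa$ inside $\mathbf{a}_{\ell,k}$ must also not perturb the $\omega$ entry. My first approach is to reuse the proposition's $\boldsymbol{\Phi}_k$-invariance for the $\kappa$ marginal. For the $\omega$ entry, I will exploit the fact that with $\mathbf{W}=\mathbf{I}_M$ and unit modulus the full-array Slepian--Bangs expression depends on $\mathbf{a}_{\ell,k}$ only through $\|\mathbf{a}\|^2$ and the weighted moments $\sum\bar{m}^2$, both of which are independent of the phases.
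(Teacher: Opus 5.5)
Your proposal is correct and follows the paper's route: a subcarrier-independent Jacobian acts by congruence on the $(\omega,\kappa)$ CRB that Proposition~\ref{prop:geom_div} controls. It also usefully makes explicit what the paper's one-line ``inherits the invariance'' leaves implicit, namely that the $\omega$ and $\kappa$ marginals must both scale by $1/\sum_k\alpha_k^2$, because $\partial\kappa/\partial\theta\neq0$ makes the congruence non-diagonal. Drop your intermediate claim that the whole $(\kappa,p,N_0)$ block scales by $\sum_k\alpha_k^2$ (the $(p,N_0)$ sub-block scales by $K_s$); the obstacle you flag then closes directly, since at $\mathbf{W}=\mathbf{I}_M$ the vector $\mathbf{a}_k$ is an eigenvector of $\mathbf{R}_{y,k}$ and $\Reop\{\mathbf{a}_k^H\partial_\kappa\mathbf{a}_k\}=0$ for unit-modulus entries, so $[\mathbf{J}_k]_{\kappa p}=[\mathbf{J}_k]_{\kappa N_0}=0$ and no Schur complement arises.
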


\begin{remark}[Multipath]
For $d > 1$, inner products between steering vectors at distinct path
angles are oscillatory sums rather than centred moments, so the
odd/even cancellation underlying Proposition~\ref{prop:geom_div} does
not extend beyond the single-path case.
\end{remark}

\begin{IEEEproof}
Unit-modulus steering entries reduce every Fisher inner product at the
full array to a centred-aperture moment, independent of the element
phases. The curvature, power, and noise derivatives carry even powers
of the centred index $\bar{m}$, while the linear-coefficient derivative
carries an odd power; the cross entries between $\omega$ and
$(\kappa, p, N_0)$ therefore reduce to first and third moments of
$\bar{m}$, both zero under the stated symmetry. The phase ramp
$\boldsymbol{\Phi}_k$ is a diagonal unitary that commutes with the
even-power diagonal weighting in every quadratic form, so it cancels
identically and cannot alter the $(\kappa, p, N_0)$ Fisher block. This
establishes exact orthogonality and the invariance stated in
Proposition~\ref{prop:geom_div}.

The curvature derivative in~\eqref{eq:da_derivs} scales as $\alpha_k$,
so the curvature-related FIM entries scale as $\alpha_k^2$. Averaging
over a symmetric frequency band gives $\overline{\alpha^2} = 1 +
(B/f_c)^2/12$, giving the per-eigenvalue scalar gain
$\Delta_{\alpha^2}(B)$ of~\eqref{eq:gd_scalar_bound}. At the operating
fractional bandwidth $B/f_c = 1.43\times10^{-2}$,
\eqref{eq:gd_scalar_bound} evaluates to $7.4\times10^{-5}$~dB, and it
remains below $3\times10^{-4}$~dB throughout the tested regime.

Under hybrid compression ($\mathbf{W} \neq \mathbf{I}_M$), the
projector $\mathbf{P}_\mathbf{W}$ does not in general preserve the
reflection symmetry required for the cancellation above, so
$\boldsymbol{\Phi}_k$ no longer commutes exactly with the compressed
Fisher block and a residual coupling between $\omega$ and
$(\kappa, p, N_0)$ appears. Over 385 independent constant-modulus
combiners at $N_\mathrm{RF} = 16$, $B = 400$~MHz, this residual has
mean $+0.157$~dB, median $+0.091$~dB, and standard deviation
$0.767$~dB, and is positive in 54.3\% of realisations; its spread
decays with $N_\mathrm{RF}$ and vanishes as $\mathbf{W} \to
\mathbf{I}_M$, consistent with Proposition~\ref{prop:geom_div}. The
squint contribution at $\mathbf{W} = \mathbf{I}_M$ was verified
numerically to machine precision, confirming the theorem.
\end{IEEEproof}

\emph{Interpretation.}
Curvature contributes negligibly to the full-array wideband range
gain, regardless of bandwidth: that gain is exactly the
data-diversity term to within $\Delta_{\alpha^2}(B)$.
Under hybrid compression the residual is not a propagation-geometry
gain but a combiner-dependent fluctuation of either sign, whose
ensemble statistics are given in the proof above.
The analysis supports no extrapolation of this residual beyond the
tested bandwidth range.

\paragraph{Worked example (verifying the decomposition)}
\label{par:worked_example}
At the operating point
$(B, r, N_\mathrm{RF}, K_s) = (400~\mathrm{MHz},\, 5~\mathrm{m},\,
16,\, 512)$, the narrowband range bound is
$\sqrt{\mathrm{CRB}_r^\mathrm{NB}} = 11.948$~mm, the
data-diversity prediction is
$\sqrt{\mathrm{CRB}_r^\mathrm{DD}} = 528.04~\mu\mathrm{m}$
($\Delta_\mathrm{DD} = +27.093$~dB), and the true wideband
bound is $\sqrt{\mathrm{CRB}_r^\mathrm{WB}} = 487.12~\mu\mathrm{m}$.
The residual $0.701$~dB between the data-diversity prediction
and the true wideband bound is one realisation of the
compression-dependent term of
Proposition~\ref{prop:geom_div}, at the 80th percentile of the
385-draw ensemble; it is not a deterministic or representative
value.

\subsection{Comparison with Full-Array Wideband CRBs}

\begin{remark}[Relation to Prior Wideband Near-Field CRBs]
\label{rem:prior_crb}
Wei et~al.~\cite{wei2025wbcrb} derived wideband near-field CRBs for
position, velocity, and radar cross-section, and
Wang et~al.~\cite{wang2025twc} for angle, range, and complex gain,
both assuming full-array access
($\mathbf{W} = \mathbf{I}_M$).
Our bound differs in three respects:
\begin{enumerate}
\item \emph{Hybrid compression:}
  We account for the information loss through the
  $M \times N_\mathrm{RF}$ analog combiner~$\mathbf{W}$, so the
  bound follows from the
  $N_\mathrm{RF} \times N_\mathrm{RF}$ compressed covariance and is
  strictly larger than its $M \times M$ full-array counterpart.
  The gap decreases as $N_\mathrm{RF} \to M$.
\item \emph{Channel estimation parameterisation:}
  Our parameter vector
  $\boldsymbol{\eta} = [\boldsymbol{\omega}^T,
  \boldsymbol{\kappa}^T, \mathbf{p}^T, N_0]^T$
  targets channel estimation (angle, range, path powers,
  noise variance), whereas~\cite{wei2025wbcrb} parameterizes in
  terms of Cartesian position, velocity, and radar cross-section,
  and~\cite{wang2025twc} in terms of angle, range, and complex
  gain.
\item \emph{Information decomposition:}
  Propositions~\ref{prop:data_div}
  and~\ref{prop:geom_div} provide a clean separation of
  the wideband CRB gain into an exact data-diversity term
  and a full-array invariance result, which is absent
  in~\cite{wei2025wbcrb,wang2025twc}.
\end{enumerate}
\end{remark}

\section{Numerical Results}
\label{sec:results}

We evaluate the wideband compressed CRB using an $M = 256$ element ULA at
$f_c = 28$~GHz ($d_\mathrm{ant} = \lambda_c/2$), equipped with
$N_\mathrm{RF} = 16$ RF chains (default).
The OFDM waveform uses subcarrier spacing $\Delta f = 120$~kHz
over a bandwidth sweep $B \in [0.12, 800]$~MHz.
Unless otherwise stated, we assume a single propagation path ($d = 1$) at
default angle $\theta = 40^\circ$ and range $r = 5$~m, with $N = 64$
snapshots at an SNR of $10$~dB per antenna.
The single-path setting isolates the curvature and squint effects,
consistent with~\cite{wei2025wbcrb,wang2025twc}.
As a robustness check, we evaluate a two-path scene
($\theta = \{30^\circ, 50^\circ\}$, $r = \{3, 7\}$~m, equal power)
at $\mathrm{SNR} = 10$~dB and $N_\mathrm{RF} = 16$ using the fixed
seed-42 combiner. Under the tolerance in~\eqref{eq:fim_pinv}, the SVD
pseudoinverse retains four of seven singular directions, with
retained-subspace condition number $6.6 \times 10^3$. Relative to the
full inverse, the truncation changes the two range bounds by at most
$3.5 \times 10^{-4}$~dB, giving
$\sqrt{\mathrm{CRB}_{r_1}} \approx 0.40$~mm and
$\sqrt{\mathrm{CRB}_{r_2}} \approx 0.85$~mm, both finite and
well-separated from the narrowband bound.
The analog combiner $\mathbf{W}$ uses random constant-modulus entries
with fixed seed for reproducibility.

\paragraph{Fig.~\ref{fig:heatmap}: covariance mismatch}
Fig.~\ref{fig:heatmap} shows the relative Frobenius mismatch
$\delta(k,r) = \|\mathbf{R}_{y,k} - \mathbf{R}_{y,k_c}\|_F /
 \|\mathbf{R}_{y,k_c}\|_F$
as a function of frequency ratio~$\alpha_k$ (x-axis) and
range~$r$ (y-axis) for $B \in \{100, 400, 800\}$~MHz.
The mismatch exceeds the 5\% threshold (white contour) at all
three bandwidths, reaching 64\% at $B = 100$~MHz,
177\% at $B = 400$~MHz, and 194\% at $B = 800$~MHz.
This confirms that narrowband covariance models are
inadequate at wideband operation and motivates the
frequency-aware treatment of
Sections~\ref{sec:crb}--\ref{sec:decomp}.

\begin{figure}[!htbp]
\centering
\includegraphics[width=0.86\linewidth]{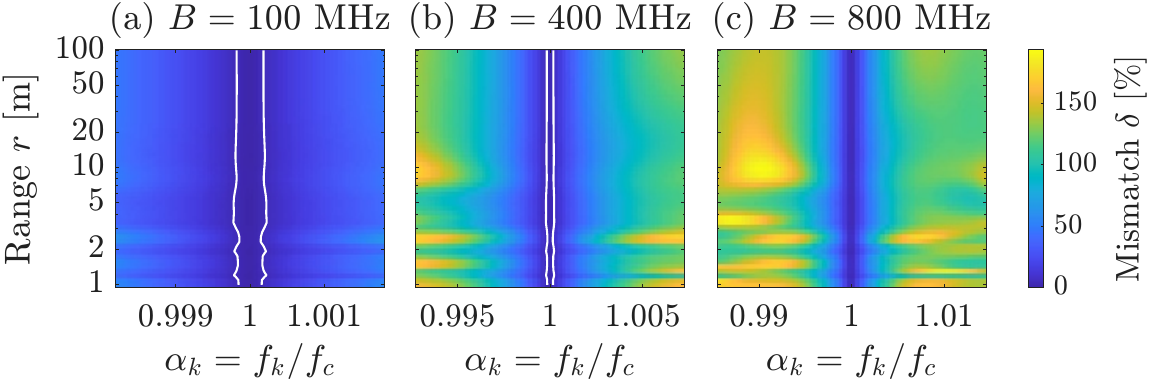}
\caption{Relative covariance mismatch $\delta(k,r)$ vs.\
frequency ratio $\alpha_k$ and range~$r$ for
$B \in \{100, 400, 800\}$~MHz.
White contour: $\delta = 5\%$.
Mismatch reaches 64\%, 177\%, and 194\% at $B = 100$, 400, 800~MHz,
motivating frequency-aware processing. At $B = 800$~MHz, the 5\% contour 
is confined to a sub-pixel strip near $\alpha_k=1$, confirming that the narrowband 
model fails across the entire operating band.}
\label{fig:heatmap}
\end{figure}

\paragraph{Fig.~\ref{fig:crb_bw}: CRB vs.\ bandwidth}
Fig.~\ref{fig:crb_bw} plots $\sqrt{\mathrm{CRB}_r}$ as a
function of bandwidth at $r = 5$~m.
Below $B \approx 60$~MHz the available subcarrier count
$K = B/\Delta f$ falls below the cap~$K_s^{\max} = 512$,
so $K_s = K$ grows with~$B$ and the CRB tracks the
$1/\sqrt{K_s}$ data-diversity scaling
predicted by Proposition~\ref{prop:data_div}.
Above $B \approx 60$~MHz, $K_s$ saturates at~$512$ and the
data-diversity contribution becomes constant; the residual
CRB decrease visible in Fig.~\ref{fig:crb_bw} for
$B \in [100, 800]$~MHz is therefore attributable to the
residual observed under hybrid compression
(Proposition~\ref{prop:geom_div}), not to a propagation-geometry
effect.
At $B = 400$~MHz ($K_s = 512$), data diversity contributes
$+27.1$~dB of the total CRB improvement over the narrowband
bound; the remainder is the combiner-dependent residual
characterized in Section~\ref{sec:decomp}.

\begin{figure}[!htbp]
\centering
\includegraphics[width=0.56\linewidth]{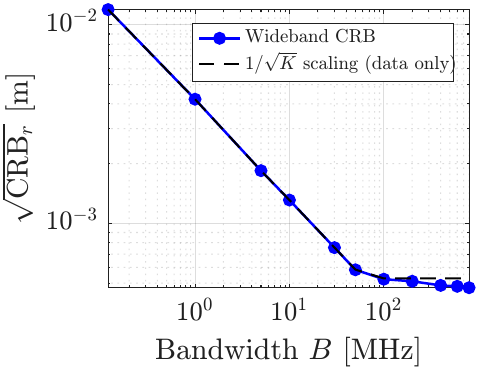}
\caption{CRB vs.\ bandwidth at $r = 5$~m, $N_\mathrm{RF} = 16$.
Dashed: $1/\sqrt{K_s}$ scaling (data diversity only).}
\label{fig:crb_bw}
\end{figure}

\paragraph{Fig.~\ref{fig:crb_nrf}: CRB vs.\ $N_\mathrm{RF}$}
Fig.~\ref{fig:crb_nrf} shows the effect of the number of RF
chains on $\sqrt{\mathrm{CRB}_r}$ at $B = 400$~MHz and
$r = 5$~m.
As $N_\mathrm{RF}$ increases from~4 to~64, the compressed CRB
decreases monotonically toward the full-array bound,
consistent with the compression loss vanishing as
$N_\mathrm{RF} \to M$.
With $N_\mathrm{RF} = 16$ (our default), the fixed-combiner gap is
$\approx 12.6$~dB; at $N_\mathrm{RF} = 32$, it narrows to
$\approx 9.4$~dB.

\begin{figure}[!htbp]
\centering
\includegraphics[width=0.56\linewidth]{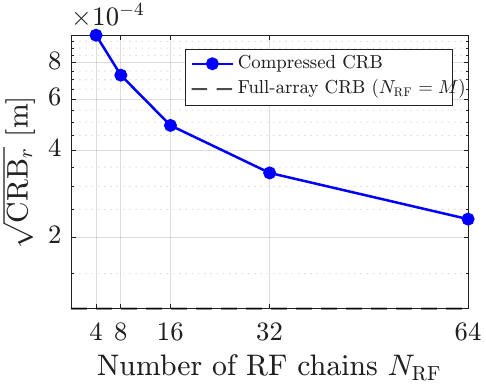}
\caption{CRB vs.\ $N_\mathrm{RF}$ at $B = 400$~MHz, $r = 5$~m.
Dashed: full-array wideband CRB.}
\label{fig:crb_nrf}
\end{figure}

\paragraph{Fig.~\ref{fig:crb_range}: CRB vs.\ range}
Fig.~\ref{fig:crb_range} plots
$\sqrt{\mathrm{CRB}_\theta}$ and $\sqrt{\mathrm{CRB}_r}$
vs.\ range at $B = 400$~MHz,
comparing the wideband compressed CRB, the narrowband
compressed CRB (single subcarrier), and the full-array
wideband CRB ($\mathbf{W} = \mathbf{I}_M$).
The vertical line marks the effective beamfocused Rayleigh
distance (EBRD), bounded by $(r_\mathrm{RD}/10)\cos^2\theta
\approx 20.4$~m at the default $\theta =
40^\circ$~\cite{hussain2025ebrd}, where $r_\mathrm{RD} =
2D_\mathrm{ap}^2/\lambda_c$ is the Rayleigh distance.
The EBRD bounds the region of finite beam depth and
polar-domain sparsity. It is shown for scale: the EBRD
governs beamfocusing geometry rather than the estimation
bound, and the range curves in panel~(b) follow smooth $r^2$ growth
throughout the tested interval.
The wideband bound is uniformly lower than the narrowband
bound by $\approx 27$~dB, and the compression gap relative to
the full-array CRB is $\approx 12.6$~dB at $r = 5$~m.

\begin{figure}[!htbp]
\centering
\includegraphics[width=0.88\linewidth]{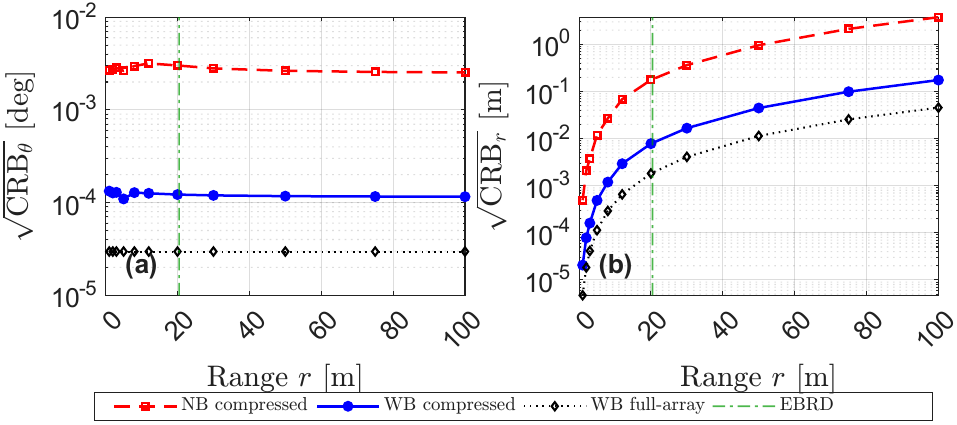}
\caption{CRB vs.\ range at $B = 400$~MHz:
(a)~$\sqrt{\mathrm{CRB}_\theta}$ [deg],
(b)~$\sqrt{\mathrm{CRB}_r}$ [m].
Dashed: narrowband compressed. Dotted: full-array wideband.
Vertical line: effective beamfocused Rayleigh distance (EBRD).
The wideband bound stays ${\approx}27$~dB below the narrowband bound.}
\label{fig:crb_range}
\end{figure}

\subsection{Synthesis and Discussion}
\label{sec:discussion}

The $+27.1$~dB data-diversity improvement reported at
$B = 400$~MHz merits careful interpretation.
It presumes the fixed per-subcarrier-power convention, under
which each subcarrier carries an independent pilot symbol and
the $K_s$-fold gain counts the additional pilot symbols
transmitted across the wideband; under the alternative
fixed-total-energy budget the gain does not persist, as
Remark~\ref{rem:energy_norm} quantifies.
Under that convention the $\sim 12.6$~dB compression gap at
$N_\mathrm{RF} = 16$ remains the dominant loss.


\subsection{Estimator Validation}
\label{sec:ctA}

The compressed-domain CRB derived in this paper is a bound on any
unbiased estimator; whether it has the correct scale and
signal-to-noise ratio (SNR) dependence is best confirmed against a
feasible estimator rather than by inspection of the FIM alone.
We construct a MUSIC-type estimator on the whitened per-subcarrier
eigenvector stack, searching jointly in $(\omega,\kappa)$ with a
sum-of-null-residuals objective across $K_s$ subcarriers; it is
included solely as a bound-scale and SNR-dependence check and is
not a proposed contribution.
Fig.~\ref{fig:ctA} reports its root-mean-square error (RMSE) against
$\sqrt{\mathrm{CRB}_r}$ over eight SNR points from $-20$ to
$10$~dB at $B = 400$~MHz, with the
default geometry and a fixed combiner.
This experiment uses $K_s = 64$ subcarriers rather than the
Section~\ref{sec:results} default of $512$.
All trials enter the reported RMSE, including boundary and
non-converged outcomes; no conditional filtering is applied.

\begin{figure}[!htbp]
\centering
\includegraphics[width=0.56\linewidth]{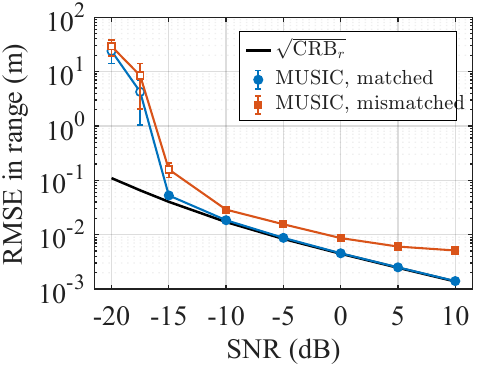}
\caption{All-trial RMSE vs.\ $\sqrt{\mathrm{CRB}_r}$ for the matched
(frequency-aware) and narrowband-mismatched MUSIC-type estimators at
$B = 400$~MHz, $\theta = 40^\circ$, $r = 5$~m, fixed combiner,
$K_s = 64$ subcarriers.
Open markers identify the threshold region: at $-20$~dB, capture is
$0.4643$ (matched) and $0.2886$ (mismatched), with gross rates $0.4400$
and $0.7186$; the large separation there reflects global
estimator breakdown and is not evidence of a loose bound.
Finite-sample scatter of $\pm 0.3$--$0.4$~dB at $N_\mathrm{MC} = 600$
is expected on the dB scale and should not be read as trend.
$N_\mathrm{MC} = 1400$ at $-20$, $-17.5$, and $-15$~dB; $600$
elsewhere.}
\label{fig:ctA}
\end{figure}

The clearest summary is the joint statistic
$\mathbb{E}[\mathbf{e}^\top \mathbf{C}^{-1}\mathbf{e}]$, where
$\mathbf{e}$ is the estimation error in $(\omega,\kappa)$ and
$\mathbf{C}$ is the CRB covariance; its expected value is exactly~$2$
however the error splits between the two coordinates, so it is
invariant to the coordinate rotation that can make either marginal
look favorable or unfavorable.
At $\mathrm{SNR} = 10$~dB the matched (frequency-aware) processor
gives a mean value of $2.019778$, a joint gap of $+0.043$~dB.
The bound has the correct scale and SNR dependence, while a finite
gap between the estimator and the bound is expected.
Below $-15$~dB the joint statistic becomes outlier-dominated by
occasional large errors, and open markers in Fig.~\ref{fig:ctA} flag
this threshold region explicitly.
The operational transition for the matched processor falls between
$-17.5$ and $-15$~dB.

A second, narrowband-mismatched processor uses the same estimator
with the per-subcarrier frequency scaling $\alpha_k$ fixed at unity.
At the tested $400$~MHz, $40^\circ$, $5$~m configuration and fixed
combiner, narrowband processing shows a $4.5$~mm range bias
over $0$--$10$~dB SNR; its RMSE loss relative to frequency-aware
processing grows
with SNR, reaching $11.3$~dB at $10$~dB SNR
($10\log_{10}(\mathrm{MSE}_\mathrm{NB}/\mathrm{MSE}_\mathrm{M})$).
The mismatched curve is biased, so its separation from the CRB is
not an efficiency gap; narrowband mismatch also worsens
threshold-region robustness.

\section{Conclusion}
\label{sec:conclusion}

We derived the wideband compressed-domain Cram\'{e}r--Rao bound
for near-field channel estimation under hybrid XL-MIMO.
First, per-subcarrier compressed covariances diverge from the
narrowband model by up to 177\% at $B = 400$~MHz, so
frequency-aware processing is necessary.
Second, the wideband FIM decomposes into a dominant data-diversity
component scaling as $10\log_{10}K_s$~dB, exact for the full array,
and a secondary residual observed under hybrid compression; data
diversity alone yields $+27.1$~dB at $B = 400$~MHz.
Third, for the fixed combiner, hybrid compression costs a further
12.6~dB relative to the
full-array CRB at $N_\mathrm{RF} = 16$, closing as $N_\mathrm{RF}$
grows; that residual is a property of the analog front end rather
than of the propagation geometry.
Companion papers develop the narrowband covariance-domain
estimation algorithm~\cite{senyuva2026clkl} and its wideband
extension~\cite{senyuva2026wbclkl}.

\end{document}